\definecolor{light-gray}{gray}{0.9}
	\newtheorem{theorem}{Theorem}%
    \newtheorem*{theorem*}{Theorem}
        \newtheorem{definition}{Definition}
		\newcommand{\ml}[1][]{\ifthenelse{\equal{#1}{}}{\mathit{ML}}{\mathit{ML}(#1)}}
		\newcommand{\sml}[1][]{\ifthenelse{\equal{#1}{}}{\mathit{SML}}{\mathit{SML}(#1)}}
		\newcommand{\sd}[1][]{\ifthenelse{\equal{#1}{}}{\mathit{SD}}{\mathit{SD}(#1)}}
		\newcommand{\rsd}[1][]{\ifthenelse{\equal{#1}{}}{\mathit{RSD}}{\mathit{RSD}(#1)}}
		\newcommand{\st}[1][]{\ifthenelse{\equal{#1}{}}{\mathit{ST}}{\mathit{ST}(#1)}}
		\newcommand{\bd}[1][]{\ifthenelse{\equal{#1}{}}{\mathit{BD}}{\mathit{BD}(#1)}}
		\newcommand{\pc}[1][]{\ifthenelse{\equal{#1}{}}{\mathit{PC}}{\mathit{PC}(#1)}}
		\newcommand{\dl}[1][]{\ifthenelse{\equal{#1}{}}{\mathit{DL}}{\mathit{DL}(#1)}}
		\newcommand{\ul}[1][]{\ifthenelse{\equal{#1}{}}{\mathit{UL}}{\mathit{UL}(#1)}}
		\newcommand{\serdict}[1][]{\ifthenelse{\equal{#1}{}}{\sigma}{\sigma(#1)}}
	\newcommand\eat[1]{}
	\newlength{\wordlength}
	\newcommand{\midd}{\mathbin{:}}
	\newcommand{\eqclass}[2][]{\ifthenelse{\equal{#1}{}}{[#2]}{[#2]_{\sim_{#1}}}}
	\newcommand{\Pref}[1][]{
		\ifthenelse{\equal{#1}{}}{\mathrel R}{\mathop{R_{#1}}}
	}                                          
	\newcommand{\sPref}[1][]{                  
		\ifthenelse{\equal{#1}{}}{\mathrel P}{\mathop{P_{#1}}}
	}                                          
	\newcommand{\Indiff}[1][]{                 
		\ifthenelse{\equal{#1}{}}{\mathrel I}{\mathop{I_{#1}}}
	}
	\newcommand{\prefset}[1][]{\ifthenelse{\equal{#1}{}}{\mathcal{R}}{\mathcal{R}_{#1}}}
\definecolor{PurplePlum}{rgb}{0.1,0,0.55} 
\definecolor{Brown}{rgb}{0.5,.25,0}
\definecolor{Green}{rgb}{0,.5,0}
\definecolor{Orange}{rgb}{1,.5,0}
\definecolor{Gray}{rgb}{0.5,0.5,0.5}
\definecolor{Black}{rgb}{0,0,0}
\newcommand{\jr}{\ensuremath{\mathit{JR}}\xspace}
\newcommand{\ejr}{\ensuremath{\mathit{EJR}}\xspace}
\newcommand{\pjr}{\ensuremath{\mathit{PJR}}\xspace}
\newcommand{\calA}{{\vec{A}}}
\begin{document}
	\title{Computational Complexity of Testing \\Proportional Justified Representation%\thanks{Grants or other notes
	%about the article that should go on the front page should be
	%placed here. General acknowledgments should be placed at the end of the article.}
	}

	 \author{Haris Aziz}%\corref{cor1}}
	  \ead{haris.aziz@data61.csiro.au}
      \address{Data61, CSIRO and UNSW Australia\\
      Computer Science and Engineering, 
      Building K17, UNSW, Sydney NSW 2052, Australia} 
      
		\author{Shenwei Huang} \ead{shenwei.huang@unsw.edu.au}
	% 	\author{Paul Harrenstein} \ead{harrenst@in.tum.de}
      \address{UNSW Australia\\
      Computer Science and Engineering, 
      Building K17, UNSW, Sydney NSW 2052, Australia} 
	
	%\cortext[cor1]{Corresponding author} 
	%\cortext[cor1]{Corresponding author} 

%	\fntext[fn1]{Thanks!} 

	\begin{abstract}
		We consider a committee voting setting in which each voter approves of a subset of candidates and based on the approvals, a target number of candidates are selected. Aziz et al. (2015) proposed two  representation axioms called justified representation and extended justified representation. Whereas the former can be tested as well as achieved in polynomial time, the latter property is coNP-complete to test and no polynomial-time algorithm is known to achieve it. Interestingly, S{\'a}nchez-Fern{\'a}ndez et~al. (2016) proposed an intermediate property called proportional justified representation that admits a polynomial-time algorithm to achieve. The complexity of testing proportional justified representation has remained an open problem. In this paper, we settle the complexity by proving that testing proportional justified representation is coNP-complete. 
	We complement the complexity result by showing that the problem admits efficient algorithms if any of the following parameters are bounded: (1) number of voters (2) number of candidates (3) maximum number of candidates approved by a voter (4) maximum number of voters approving a given candidate. 
%We complement the hardness result by showing that the problem is fixed parameter tractable both with respect to parameter number of candidates and number of voters.
	%Some notes on Justified Representation based Concepts.
	\end{abstract}

%	\begin{keyword}
%			\emph{JEL}: C63, C70, C71, and C78
%	\end{keyword}

\begin{keyword}
Social choice theory \sep 
committee voting \sep
multi-winner voting \sep
approval voting\sep
computational complexity
\\
	\emph{JEL}: C63, C70, C71, and C78
\end{keyword}

\maketitle

\newpage

\section{Introduction}

%\citep{EFSS14a}

We consider a committee voting setting in which each voter approves of a subset of candidates and based on the approvals, a target $k$ number of candidates are selected. The setting has been referred to as approval-based multiwinner voting or committee voting with approvals. The setting has inspired a number of natural voting rules~\citep{Kilg10a,BrFi07c,LMM07a,AGG+15a,SFL16a}. Many of the voting rules attempt to satisfy some notion of representation. However it has been far from clear what axiom captures the representation requirements.

\citet{ABC+15a,ABC+16a} proposed two compelling representation axioms called \emph{justified representation (\jr)} and \emph{extended justified representation (\ejr)}. Whereas the former can be tested as well as achieved in polynomial time, the latter property is coNP-complete to test and no polynomial-time algorithm is known to achieve it. Interestingly, \citet{SFFB16a} proposed an intermediate property called \emph{proportional justified representation (\pjr)} that admits a polynomial-time algorithm to achieve~\citep{BFJL16a,SFF16a}.\footnote{The property \pjr was independently proposed by Haris Aziz in October 2014 who referred to it as weak \ejr.} 
The idea behind all the three properties is that a cohesive and large enough group deserves sufficient number of approved candidates in the winning set of candidates.
 \citet{SFFB16a} argued that although \ejr is a stronger property than \pjr, \pjr is more reasonable because it is compatible with a property called perfect representation.

Proportional justified representation (\pjr) has been examined in subsequent  papers~\citep{BFJL16a,SFF16a,SFF+17a}. %The idea has also been extended to formalize proportionality of ranking
Despite the flurry of work on the property, the complexity of testing proportional justified representation has remained an open problem. 
\citet{SFF+17a} state that \emph{``we do not know what is the complexity of checking whether a given committee provides PJR''. }
In a talk ``Approval Voting, representation, \& Liquid Democracy'' at the\emph{ Workshop on Future Directions in Computational Social Choice, Hungary} in November 2016, Markus Brill also mentioned the problem as an interesting open problem.
%\footnote{hnm}
\footnote{\url{http://econ.core.hu/file/download//future_markus.pdf}}
The problem is especially important if one wants to test whether a status quo outcome or the outcome of some other rule or negotiation process satisfies \pjr.
Previously, \citet{ALL16a} studied the complexity of testing Pareto optimality of a committee.

In this paper, we settle the complexity of testing \pjr by proving that the problem is coNP-complete.
We complement the complexity result by showing that the problem admits efficient algorithms if any of the following parameters are bounded: (1) $n$ (number of voters) (2) $m$ (number of candidates) (3) $a$ (maximum number of candidates approved by a voter) (4) $d$ (maximum number of voters approving a given candidate). 
For the first two parameters, we show that the problem is  \emph{FPT (fixed-parameter tractable)}, i.e, there exists an FPT algorithm that
solves the problem in $f(k) \cdot poly(|I|)$ time, where
where $k$ is the parameter and $f$ is some computable
function and $poly$ is a polynomial both independent of problem instance $I$. 

Our results are summarized in Table~\ref{table:summary:pjr}.

\begin{table}[h!]
	%	\small
	\centering
	\scalebox{0.95}{
	\begin{tabular}{lll}
		\toprule
		Parameter & Complexity & Reference \\
		\midrule
		--- & coNP-complete
		  & Th.~\ref{th:pjr-hard}\\
		$n$: \# voters & in FPT: $O(2^{n}{mn})$
		  & Th.~\ref{th:fpt-n}\\
		 $m$: \# candidates & in FPT: $O(2^{m}{m}^3n)$
		  & Th.~\ref{th:fpt-m}\\
 		 $a:\max_{i\in N}|A_i|$ &in P for constant $a$: $O(m^{a+1}{m}^2n)$&Th.~\ref{th:p-a}\\
 		 $d: \max_{c\in C}|\{i\in N\midd c\in A_i\}|$ &in P for constant $d$: $O(n^ddnm^2)$ &Th.~\ref{th:p-d}\\
		\bottomrule
	\end{tabular}
	}
\centering
	\caption{Complexity of testing \pjr}
	\label{table:summary:pjr}
\end{table}

%
% we present FPT ( \emph{fixed-parameter tractable}) algorithms.
%
%
% In \emph{parametrized complexity theory}, a finer multivariate analysis is
% undertaken by considering multiple \emph{parameters} of the problem
% instance~\citep{DoFe13a,Nied06a}. Intuitively, a parameter is some aspect of the problem input. For example, for computational problems on graphs, the maximum degree of the graph is a natural parameter.
% Let $k$ be a parameter of an
% instance $I$. A problem with parameter $k$ belongs to the class FPT, or is said
% to be \emph{fixed-parameter tractable}, if there exists an algorithm that
% solves the problem in $f(k) \cdot poly(|I|)$ time, where~$f$ is some computable
% function and $poly$ is a polynomial both independent of $I$.

% \begin{itemize}
% 	\item \citet{SFF+17a}
% 	First, it would be desirable to design a polynomial-time algorithm for finding committees that provide PJR when k does not divide n;
% \end{itemize}

% In all these papers,
%
%
%
%
% Incidentally, the complexity of testing proportional justified representation has remained an open problem and been reiterated in recent papers .

% open problem
%
% \citep{BFJL16a}
%
% \citep{SFFB16a}
%
% \citep{SFF16a}
%
% \citep{SFF+17a}

% \section{Setting}\label{sec:prelim}

\section{Approval-based Committee Voting and Representation Properties}

We consider a social choice setting with a set $N=\{1,\ldots, n\}$ of voters and a set $C$ of $m$ candidates. 
% We let $m=|C|$ denote the number of candidates.
Each voter $i\in N$ submits an approval ballot $A_i\subseteq C$, which represents the subset of candidates that she
approves of. We refer to the list $\calA = (A_1,\ldots, A_n)$ of approval ballots as the {\em ballot profile}. 
We will consider {\em approval-based multi-winner voting rules} that take as input a tuple $(N, C, \calA, k)$, 
where $k$ is a positive integer that satisfies $k\le m$, and return a subset 
$W \subseteq C$ of size $k$, which we call the {\em winning set}, or {\em committee}. % \citep{MaKi12a}. 

We now summarize the main representation properties proposed in the literature.

%\section{Justified Representation based Concepts.}\label{sec:jr}

% \begin{definition}[Justified representation (JR)]
% Given a ballot profile $\calA = (A_1, \dots, A_n)$ over a candidate set $C$ and a target committee size $k$,
% we say that a set of candidates $W$ of size $|W|=k$ {\em satisfies justified representation
% for $(\calA, k)$} if there does not exist a set of voters $N^*\subseteq N$ with $|N^*|\ge \frac{n}{k}$ such that
% $\bigcap_{i\in N^*}A_i\neq \emptyset$ and $A_i\cap W=\emptyset$ for all $i\in N^*$.
% We say that an approval-based voting rule {\em satisfies justified representation (\jr)} if for every profile
% $\calA = (A_1, \dots, A_n)$ and every target committee size $k$ it outputs a winning set that satisfies
% justified representation for $(\calA, k)$.
% \end{definition}

\begin{definition}[Justified representation (JR)]
Given a ballot profile $\calA = (A_1, \dots, A_n)$ over a candidate set $C$ and a target committee size $k$,
we say that a set of candidates $W$ of size $|W|=k$ {\em satisfies justified representation 
for $(\calA, k)$} if 
\[\forall X\subseteq N: |X|\geq \frac{n}{k} \text{ and } |\cap_{i\in X}A_i|\geq 1 \implies (|W\cap (\cup_{i\in X}A_i)|\geq 1)\]

% there does not exist a set of voters $N^*\subseteq N$ with $|N^*|\ge \frac{n}{k}$ such that
% $\bigcap_{i\in N^*}A_i\neq \emptyset$ and $A_i\cap W=\emptyset$ for all $i\in N^*$.
% We say that an approval-based voting rule {\em satisfies justified representation (\jr)} if for every profile
% $\calA = (A_1, \dots, A_n)$ and every target committee size $k$ it outputs a winning set that satisfies
% justified representation for $(\calA, k)$.
\end{definition}

The rationale behind this definition is that if $k$ candidates are to be selected, then, intuitively,
each group of $\frac{n}{k}$ voters ``deserves'' a representative. Therefore, a set of $\frac{n}{k}$ voters 
that have at least one candidate in common should not be completely unrepresented.
%We refer the reader to \secref{sec:scr} for a discussion of alternative definitions.

% \begin{definition}[Extended justified representation (EJR)]
% Given a ballot profile $(A_1, \dots, A_n)$ over a candidate set $C$, a target committee size $k$, $k\le m$,
% and a positive integer $\ell$, $\ell\le k$,
% we say that a set of candidates $W$, $|W|=k$, {\em satisfies $\ell$-justified representation
% for $(\calA, k)$} if there does not exist a set of voters $N^*\subseteq N$ with $|N^*|\ge \ell\cdot \frac{n}{k}$ such that
% $|\bigcap_{i\in N^*}A_i|\ge \ell$, but $|A_i\cap W|<\ell$ for each $i\in N^*$;
% we say that $W$ {\em satisfies extended justified representation (\ejr) for $(\calA, k)$}
% if it satisfies $\ell$-\jr for $(\calA, k)$ for all $\ell$, $1\le\ell\le k$.
% We say that an approval-based voting rule {\em satisfies $\ell$-justified representation ($\ell$-\jr)} if for every profile
% $\calA = (A_1, \dots, A_n)$ and every target committee size $k$ it outputs a committee that satisfies
% $\ell$-\jr for $(\calA, k)$. Finally, we say that a rule
% {\em satisfies extended justified representation (\ejr)} if it satisfies $\ell$-\jr for all $\ell$, $1\le\ell\le k$.
% \end{definition}

\begin{definition}[Extended justified representation (EJR)]
Given a ballot profile $(A_1, \dots, A_n)$ over a candidate set $C$, a target committee size $k$, $k\le m$,
%and a positive integer $\ell$, $\ell\le k$, 
we say that a set of candidates $W$, $|W|=k$, {\em satisfies  $\ell$-extended justified representation
for $(\calA, k)$}  and integer $\ell$ if
\[\forall X\subseteq N: |X|\geq \ell\frac{n}{k} \text{ and } |\cap_{i\in X}A_i|\geq \ell \implies (\exists i\in X: |W\cap A_i|\geq \ell).\]

We say that $W$ {\em satisfies extended justified representation for $(\calA, k)$} if it satisfies {satisfies $\ell$-extended justified representation
for $(\calA, k)$} and all integers $\ell\leq k$.

% if there does not exist a set of voters $N^*\subseteq N$ with $|N^*|\ge \ell\cdot \frac{n}{k}$ such that
% $|\bigcap_{i\in N^*}A_i|\ge \ell$, but $|A_i\cap W|<\ell$ for each $i\in N^*$;
% we say that $W$ {\em satisfies extended justified representation (\ejr) for $(\calA, k)$}
% if it satisfies $\ell$-\jr for $(\calA, k)$ for all $\ell$, $1\le\ell\le k$.
% We say that an approval-based voting rule {\em satisfies $\ell$-justified representation ($\ell$-\jr)} if for every profile
% $\calA = (A_1, \dots, A_n)$ and every target committee size $k$ it outputs a committee that satisfies
% $\ell$-\jr for $(\calA, k)$. Finally, we say that a rule
% {\em satisfies extended justified representation (\ejr)} if it satisfies $\ell$-\jr for all $\ell$, $1\le\ell\le k$.
\end{definition}

\citet{SFFB16a} came up with the notion of \emph{proportional 
justified representation (\pjr)}, which can be seen as an alternative to \ejr.

\begin{definition}[Proportional Justified Representation (\pjr)]
% A committee is said to provide \pjr for a ballot profile $(A_1, \dots, A_n)$ over a candidate set $C$
% and a target committee size $k$ if, for
% every positive integer $\ell$, $\ell\le k$, there does not exist a set of voters $N^*\subseteq N$
% with $|N^*|\ge \ell\cdot \frac{n}{k}$ such that $|\bigcap_{i\in
% N^*}A_i|\ge \ell$, but $|(\bigcup_{i \in N^*} A_i) \cap W|<\ell$.

Given a ballot profile $(A_1, \dots, A_n)$ over a candidate set $C$, a target committee size $k$, $k\le m$, and integer $\ell$
%and a positive integer $\ell$, $\ell\le k$, 
we say that a set of candidates $W$, $|W|=k$, {\em satisfies $\ell$-proportional justified representation
for $(\calA, k)$}  if
\[\forall X\subseteq N: |X|\geq \ell\frac{n}{k} \text{ and } |\cap_{i\in X}A_i|\geq \ell \implies (|W\cap (\cup_{i\in X}A_i)|\geq \ell)\]

We say that $W$ {\em satisfies proportional justified representation for $(\calA, k)$} if it satisfies {satisfies $\ell$-proportional justified representation
for $(\calA, k)$} and all integers $\ell\leq k$.
\end{definition}

It is easy to observe that EJR implies PJR which implies JR.

		\section{Results}

		We first prove that testing \pjr is coNP-complete. The proof involves a similar type of reduction as the one used by \citet{ABC+15a,ABC+16a} to prove that testing \ejr is coNP-complete.

		\begin{theorem}\label{th:pjr-hard}
		Given a ballot profile $\calA$, a target committee size $k$, and a committee $W$, $|W|=k$,
		it is {\em coNP}-complete to check whether $W$ satisfies \pjr for $(\calA, k)$.
		\end{theorem}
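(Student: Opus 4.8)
The plan is to establish coNP-completeness in two parts: membership in coNP, which is routine, and coNP-hardness, which is the substantive part.

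For membership in coNP, observe that a \emph{witness} that $W$ fails \pjr is a pair $(X,\ell)$ together with a set of $\ell$ common candidates: that is, a coalition $X\subseteq N$ and an integer $\ell\le k$ with $|X|\ge \ell n/k$, $|\cap_{i\in X}A_i|\ge \ell$, but $|W\cap(\cup_{i\in X}A_i)|<\ell$. Such a witness has polynomial size and can be verified in polynomial time by checking the three cardinality conditions, so testing \pjr is in coNP.

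The main work is the hardness reduction, and I would follow the template that \citet{ABC+15a,ABC+16a} used for \ejr. I would reduce from the complement of a suitable NP-complete problem --- a natural candidate is \textbf{Balanced Biclique} (or a set-cover/independent-set style problem), exactly the source used in the \ejr proof --- so that a ``no'' instance of the combinatorial problem corresponds to $W$ satisfying \pjr and a ``yes'' instance to a \pjr violation. The construction must encode the instance as a ballot profile $\calA$, a committee size $k$, and a fixed committee $W$, arranging matters so that a \pjr-violating coalition $X$ at some level $\ell$ exists if and only if the original instance has the desired combinatorial structure (e.g.\ a clique/biclique of the target size among ``unrepresented'' voters). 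The design should use padding voters and dummy candidates to pin down $n$, $k$, and the threshold $\ell n/k$ so that the inequality $|X|\ge \ell n/k$ forces $X$ to have precisely the critical size, while the common-approval condition $|\cap_{i\in X}A_i|\ge\ell$ forces the coalition to behave like a clique, and the failure condition $|W\cap(\cup_{i\in X}A_i)|<\ell$ encodes that these candidates are absent from $W$.

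The hardest step --- and the place where the proof genuinely differs from the \ejr reduction --- is handling the \emph{union} rather than the existential clause in the definition. In \ejr one must defeat the condition $\exists i\in X:|W\cap A_i|\ge\ell$, whereas for \pjr one must defeat $|W\cap(\cup_{i\in X}A_i)|\ge\ell$, which aggregates over the whole coalition and is therefore easier to satisfy (consistent with \pjr being weaker than \ejr). So the delicate part is calibrating the gadget so that the winning set $W$ already covers almost all candidates approved by any candidate coalition, leaving a violation possible only at the exact parameter values corresponding to a ``yes'' instance; this typically requires a careful count ensuring that $|W\cap(\cup_{i\in X}A_i)|$ sits just below $\ell$ precisely when the combinatorial object exists and at least $\ell$ otherwise. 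After fixing the gadget, I would verify both directions of the equivalence and confirm that $W$ respects every level $\ell'\le k$ other than the critical one, so that the only possible \pjr failure is the intended one. Combining coNP membership with this hardness reduction yields coNP-completeness.
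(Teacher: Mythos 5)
Your coNP-membership argument is fine, and you have correctly identified the paper's approach: a reduction from {\sc Balanced Biclique}, modelled on the \ejr hardness proof of \citet{ABC+15a,ABC+16a}. You have also put your finger on exactly the right difficulty --- that \pjr's condition $|W\cap(\cup_{i\in X}A_i)|\geq \ell$ aggregates over the coalition, so the gadget must guarantee that any candidate coalition already picks up almost $\ell$ winners ``for free,'' leaving a genuine deficit only when a biclique exists.

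The gap is that you never actually build the gadget, and the construction is the entire substance of the hardness proof. Phrases like ``the design should use padding voters and dummy candidates to pin down $n$, $k$, and the threshold'' and ``this typically requires a careful count'' describe what a proof would have to accomplish without accomplishing it. The paper's reduction depends on specific, carefully balanced numbers: with $R=\{v_1,\dots,v_s\}$ it takes $C_0=L$, a block $C_1$ of $\ell-1$ candidates, a padding block $C_2$ matched one-to-one with padding voters $N_2$, a block $N_1$ of $\ell s$ voters each approving all of $C_0\cup C_1$, committee size $k=2\ell-2$, and $W=C_1\cup X$ for some $(\ell-1)$-subset $X$ of $C_2$; the padding sizes are chosen precisely so that $n=2(s+1)(\ell-1)$ and hence $n/k=s+1$. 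The resolution of the ``union'' difficulty you flagged is then concrete: any violating coalition $N^*$ must contain a voter from $N_1$ (it has at least $s+1$ members, cannot touch $N_2$, and $|N_0|=s$), so $\cup_{i\in N^*}A_i$ already contains all of $C_1$ and the coalition is represented by exactly $\ell-1$ winners, which forces the violation level $j$ to be at least $\ell$ and in turn forces $\ell$ voters of $N_0$ to share $\ell$ candidates of $C_0$ --- i.e., an $\ell$-by-$\ell$ biclique. Without exhibiting numbers with these properties and verifying both directions of the equivalence, your argument is a proof plan rather than a proof.
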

		\begin{proof}
		It is easy to see that this problem is in coNP. A set of voters $X\subset N$ such that
$|X|\geq \ell\frac{n}{k}$,  $|\cap_{i\in X}A_i|\geq \ell$ and $|W\cap (\cup_{i\in X}A_i)|\geq \ell)$ is a certificate that $W$ does not satisfy \pjr.

		To prove coNP-completeness, we reduce the classic {\sc Balanced Biclique} problem
		([GT24] in \citeauthor{GaJo79a} \citeyear{GaJo79a})
		to the complement of our problem. An instance of {\sc Balanced Biclique} is given by a bipartite
		graph $(L, R, E)$ with parts $L$ and $R$ and edge set $E$, and an integer $\ell$; it is a ``yes''-instance
		if we can pick subsets of vertices $L'\subseteq L$ and $R'\subseteq R$ so that $|L'|=|R'|=\ell$
		and $(u, v)\in E$ for each $u\in L', v\in R'$; otherwise, it is a ``no''-instance.

		Given an instance $\langle (L, R, E), \ell\rangle$ of {\sc Balanced Biclique} with $R=\{v_1, \dots, v_s\}$, 
		we create an instance of our problem as follows. Assume without loss of generality that $s\ge 3$, $\ell\ge 3$.
		We construct $3$ pairwise disjoint sets of candidates $C_0$, $C_1$ and $C_2$, 
		so that $C_0=L$, $|C_1|=\ell-1$, $|C_2|=s\ell+\ell-3s+(\ell-2)$, and set $C=C_0\cup C_1\cup C_2$.
		We then construct $3$ sets of voters $N_0$, $N_1$, $N_2$, so that $N_0=\{1, \dots, s\}$, 
		$|N_1|=\ell(s-1)+\ell$, $|N_2|=s\ell+\ell-3s+(\ell-2)$ (note that $|N_2|\ge (\ell-1)$ as we assume that $\ell\ge 3$).
		For each $i\in N_0$ we set $A_i=\{u_j\mid (u_j, v_i)\in E\}$, 
		and for each $i\in N_1$ we set $A_i=C_0\cup C_1$. The candidates in $C_2$
		are matched to voters in $N_2$: each voter in $N_2$ approves exactly one candidate in $C_2$, 
		and each candidate in $C_2$ is approved by exactly one voter in $N_2$.
		Denote the resulting list of ballots by $\calA$. 
		Finally, we set $k=2\ell-2$, and let $W=C_1\cup X$, where $X$ is a subset of $C_2$ with $|X|=\ell-1$.
		Note that the number of voters $n$ is given by $s+\ell(s-1)+\ell+s\ell+\ell-3s+(\ell-2)=2(s+1)(\ell-1)$, so $\frac{n}{k}=s+1$.
		
		%Then, it can be proven that we have a ``yes''-instance of {\sc Balanced Biclique} iff $W$ does not satisfy \ejr.
		Suppose first that we started with a ``yes''-instance of {\sc Balanced Biclique}, and let $(L', R')$
		be the respective $\ell$-by-$\ell$ biclique. Let $C^*=L'$ and $N^*=R'\cup N_1$.
		Then $|N^*|=\ell (s+1)=\ell \frac{n}{k}$, all voters in $N^*$ approve all candidates in $C^*$, $|C^*|=\ell$,
		but all voters in $N^*$ together are only represented by $\ell-1$ candidates in $W$. Hence, $W$
		fails to provide $\ell$-proportional justified representation for $(\calA, k)$.

		Conversely, suppose that $W$ fails to provide \pjr for $(\calA, k)$. That is, there exists
		a value $j>0$, a set $N^*$ of $j(s+1)$ voters and a set $C^*$ of $j$ candidates so that all voters
		in $N^*$ approve of all candidates in $C^*$, but all voters in $N^*$ together are only represented by
		less than $j$ candidates in $W$. Note that, since $s>1$ and $j\ge 1$, we have $N^*\cap N_2=\emptyset$.
		Further, since $|N^*|=j(s+1)\ge s+1$ and $|N_0|=s$, it follows that $N^*$ contains one voter from $N_1$.
		So, all voters in $N^*$ together are represented by exactly $\ell-1$ candidates in $W$. This implies that $j\ge \ell$. 
		As $N^*=j(s+1)\ge \ell(s+1)$, it follows that $|N^*\cap N_0|\ge \ell$.
		Since $N^*$ contains voters from both $N_0$ and $N_1$, it follows that $C^*\subseteq C_0$.
		Thus, there are at least $\ell$ voters in $N^*\cap N_0$ who approve the same $j\ge \ell$ candidates
		in $C_0$; any set of $\ell$ such voters and $\ell$ such candidates corresponds to an $\ell$-by-$\ell$ biclique
		in the input graph.
		\end{proof}

Note that although there is a polynomial-time algorithm to compute a committee that achieves \pjr~\citep{BFJL16a,SFF16a}, we have proved that checking whether any arbitrary committee achieves \pjr is coNP-complete. 
We complement the negative computational result by showing that testing \pjr is computationally tractable if one of the following parameters is bounded. 
		
		\begin{itemize}
			\item $m=|C|$
			\item $n=|N|$
				\item $a=\max_{i\in N}|A_i|$  (maximum size of approval sets).
				\item $d=\max_{c\in C}|\{i\in N\midd c\in A_i\}|$ (maximum number of approvals of a candidate). 
				\end{itemize}

% \begin{theorem}
% 	Testing $\ell$-\pjr  is W[1]-hard.
% 	\end{theorem}
% \begin{proof}
% Same argument as Shenwei's reduction.
% \end{proof}

We first observe that testing \pjr  is in FPT with parameter $n$.

\begin{theorem}\label{th:fpt-n}
%There is an FPT algorithm 
Testing \pjr  is in FPT with parameter $n$ and takes time at most $O(2^{n}{mn})$.
	\end{theorem}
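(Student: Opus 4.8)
The plan is to turn the universally-quantified PJR condition into a brute-force search over coalitions, after first collapsing the inner quantifier over $\ell$. Recall that $W$ \emph{fails} PJR precisely when there exist a coalition $X\subseteq N$ and an integer $\ell$ with $1\le\ell\le k$ such that $|X|\ge \ell\frac{n}{k}$, $|\cap_{i\in X}A_i|\ge \ell$, and $|W\cap(\cup_{i\in X}A_i)|<\ell$. The starting observation is that the first two constraints only \emph{upper-bound} $\ell$, whereas the last (violation) constraint is \emph{easier} to satisfy for larger $\ell$; hence for a \emph{fixed} coalition $X$ it suffices to test the single largest admissible value. Concretely, set $\ell^*(X)=\min\{\lfloor |X|\,k/n\rfloor,\ |\cap_{i\in X}A_i|\}$ and $r(X)=|W\cap(\cup_{i\in X}A_i)|$. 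I would first prove the equivalence: the coalition $X$ witnesses a violation for \emph{some} $\ell$ if and only if $r(X)<\ell^*(X)$. The forward direction is immediate, since any witnessing $\ell$ satisfies $\ell\le \ell^*(X)$ and $\ell>r(X)$; for the converse, $\ell=\ell^*(X)$ is itself a valid witness, because $r(X)<\ell^*(X)$ forces $\ell^*(X)\ge 1$, while $\ell^*(X)\le\lfloor |X|k/n\rfloor\le k$ keeps it in range.

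With this reduction in hand the algorithm is direct: enumerate all $2^n$ subsets $X\subseteq N$, and for each compute $|X|$, the common approvals $\cap_{i\in X}A_i$, and the union $\cup_{i\in X}A_i$, from which $\ell^*(X)$ and $r(X)$ are read off in $O(m)$ time using characteristic bit-vectors of length $m$ over $C$. We report that $W$ fails PJR if and only if some coalition satisfies $r(X)<\ell^*(X)$; otherwise $W$ satisfies PJR. Correctness is exactly the equivalence above applied across all coalitions.

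For the running time, processing a coalition $X$ by intersecting and unioning the ballots of its members costs $O(|X|\,m)$. Since $\sum_{X\subseteq N}|X|=n\,2^{n-1}$ (each voter lies in $2^{n-1}$ subsets), the total work summed over all coalitions is $O(2^n mn)$, matching the claimed bound. If one prefers, traversing the subsets in Gray-code order lets each step update the running intersection and union incrementally in $O(m)$, which also yields the bound; but the crude amortized count already suffices.

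The only genuinely non-routine step is the collapse of the $\ell$-quantifier to the single value $\ell^*(X)$; everything afterward is bookkeeping. I would therefore concentrate the care on verifying that $\ell^*(X)$ is simultaneously feasible for the cardinality constraint and the intersection constraint, and that no other choice of $\ell$ can expose a violation that $\ell^*(X)$ misses — which is precisely the equivalence stated in the first paragraph. (Degenerate coalitions cause no trouble: for $X=\emptyset$ one has $\ell^*(X)=0=r(X)$, so the test correctly reports no violation.)
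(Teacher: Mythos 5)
Your proposal is correct and takes essentially the same approach as the paper: brute-force enumeration of all $2^n$ voter coalitions, with per-coalition work $O(mn)$ (amortized to $O(|X|m)$ in your count), giving the stated $O(2^n mn)$ bound. The only difference is that you explicitly justify collapsing the quantifier over $\ell$ to the single value $\ell^*(X)=\min\{\lfloor |X|k/n\rfloor,\,|\cap_{i\in X}A_i|\}$, a detail the paper's one-line proof leaves implicit; your equivalence argument for this step is sound.
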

\begin{proof}
	Suppose we want to check whether $W\subset C$ satisfies \pjr.
If $n$ is bounded then one can simply brute force all the possible violating sets $X\subseteq N$ of voters and check that if $|X|\geq \ell\frac{n}{k} \text{ and } |\cap_{i\in X}A_i|\geq \ell$ then it must be that  $|W\cap (\cup_{i\in X}A_i)|\geq \ell$.
\end{proof}

Next, we prove that testing \pjr  is in FPT with parameter $m$.

\begin{theorem}\label{th:fpt-m}
	Testing \pjr  is in FPT with parameter $m$ and takes time at most $O(2^{m}{m}^3n)$.
	\end{theorem}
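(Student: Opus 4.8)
The plan is to turn the parameter $m$ into an enumeration over candidate sets, using the fact that the only genuinely non-local condition in a \pjr violation is the cohesiveness (intersection) requirement. Recall that $W$ fails \pjr precisely when there is an integer $\ell\le k$ and a voter set $X$ with $|X|\ge \ell\frac{n}{k}$, $|\cap_{i\in X}A_i|\ge \ell$, and $|W\cap(\cup_{i\in X}A_i)|\le \ell-1$. Since there are only $m$ candidates, the common approved set $T:=\cap_{i\in X}A_i$ ranges over at most $2^m$ possibilities. So the first step is to iterate over all $T\subseteq C$ and treat each as a guess for the shared approved set of a would-be violating group. For a fixed $T$, the only voters that can lie in such a group are $M(T):=\{i\in N: T\subseteq A_i\}$, computable in $O(mn)$ time, and any subset of $M(T)$ automatically satisfies the intersection condition with $\ell=|T|$ (smaller intersections are handled by smaller guesses $T$). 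Thus the enumeration over $2^m$ subsets discharges the cohesiveness condition entirely, which is the feature that makes the problem tractable once $m$ is bounded, and it mirrors the candidate-side of the biclique structure used in the hardness reduction of Theorem~\ref{th:pjr-hard}.

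The remaining per-guess task is purely about the winners: setting $\ell=|T|$, decide whether we can retain at least $\ell\frac{n}{k}$ voters of $M(T)$ while the union of their approved winners meets $W$ in at most $\ell-1$ candidates. The second step is therefore to design a polynomial subroutine that answers this for each $T$, and the target bound $O(2^m m^3 n)$ suggests budgeting about $O(m^3 n)$ per guess (an $O(m)$ factor on top of an $O(m^2 n)$ scan of $M(T)$ against $W$). Correctness would then follow by matching any actual violation to the guess $T$ equal to its common approved set, and the time bound by accounting the $2^m$ guesses against the per-guess work.

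The step I expect to be the main obstacle is exactly this selection, because maximizing the group size and minimizing the union's overlap with $W$ pull against each other: enlarging $X$ can only enlarge $\cup_{i\in X}A_i$, hence can only increase the overlap with $W$. Two natural shortcuts both fail, and seeing why pins down what the real lemma must say. Taking \emph{all} of $M(T)$ is not sound, since voters who approve $T$ together with many winners inflate the overlap and can mask a violation carried by a proper sub-collection; but symmetrically one cannot instead guess the union $U=\cup_{i\in X}A_i$ and keep $\{i:A_i\subseteq U\}$, because a voter approving a strict subset of $U$ can shrink the intersection below $\ell$. The crux is thus a canonical-form claim identifying which voters an optimal violating sub-collection of $M(T)$ keeps, intuitively those whose approved winners fit inside a small ``budget'' set of size $\ell-1$, and then showing that the existence (or size) of such a sub-collection can be certified within the per-guess budget. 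Establishing that this selection is resolvable in polynomial time per guess, so that the intersection condition is settled by the $2^m$ enumeration while the representation condition is settled by the subroutine, is where I would concentrate the effort.
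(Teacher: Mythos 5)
Your high-level plan is the right one and your diagnosis of the difficulty is accurate: enumerating the guessed common approved set $T$, restricting attention to $M(T)=\{i\in N: T\subseteq A_i\}$, and noting that every subset of $M(T)$ automatically satisfies the cohesiveness condition for $\ell=|T|$ is exactly how the $2^{m}$ budget should be spent. Your observation that the two obvious shortcuts are unsound is also correct: taking all of $M(T)$ can inflate $|W\cap(\cup_{i\in X}A_i)|$ past $\ell-1$ and mask a violation carried by a sub-collection, and guessing the union $U$ and keeping $\{i: A_i\subseteq U\}$ can destroy the intersection condition (already for $\ell=1$: take $A_1=\{c_1,c_2\}$, $A_2=\{c_1,c_2,c_3\}$, $A_3=\{c_3\}$ with $W$ disjoint from all three ballots; the pair $\{1,2\}$ witnesses a violation, but every union-guess that contains both ballots also absorbs voter $3$ and empties the intersection). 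The problem is that your proposal stops precisely at the step that carries the theorem. You never exhibit the per-guess procedure that decides whether some sub-collection of $M(T)$ of size at least $\ell\frac{n}{k}$ has its union meeting $W$ in at most $\ell-1$ candidates, nor prove its running time; you explicitly defer it as ``where I would concentrate the effort.'' As written this is a plan with its central lemma open, so it does not establish the statement. Moreover, the polynomial per-guess subroutine you hope for is unlikely to exist in that generality: selecting many sets whose union is small is a minimum-union-type problem, so you should only aim for a per-guess cost that is itself FPT in $m$.

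The gap can be closed by guessing, alongside $T$, a budget set $B\subseteq W$ with $|B|\le \ell-1$ intended to contain all winners approved by the violating group, and forming the canonical maximal set $X_{T,B}=\{i\in N: T\subseteq A_i \text{ and } A_i\cap W\subseteq B\}$. Within this restricted pool all three conditions are monotone under adding voters (the size grows, the intersection still contains $T$, and the union's overlap with $W$ stays inside $B$), so an $\ell$-\pjr violation with intersection containing $T$ and winner-overlap inside $B$ exists if and only if $|X_{T,B}|\ge\ell\frac{n}{k}$. This yields a correct algorithm with running time on the order of $\sum_{\ell}\binom{m}{\ell}\binom{k}{\ell-1}\cdot O(mn)=O(4^{m}\cdot mn)$, which proves membership in FPT with parameter $m$ but with a larger exponential base than the advertised $O(2^{m}m^{3}n)$. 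For comparison, the paper's own proof takes exactly the kind of shortcut you flag: it forms a single maximal voter set $X_S=\{i: A_i\subseteq S\}$ from the guess $S$ (while calling $S$ the intersection) and tests the three conditions on that one set, with no budget-set step; so the difficulty you identified is real and is not discharged by the published argument either. Credit for spotting it, but you still owe the lemma.
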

	\begin{proof}
		Suppose we want to check whether $W\subset C$ satisfies \pjr.
		Note that it is sufficient to show that testing $\ell$-\pjr is in FPT with parameter $m$. Note that $\ell\leq |X|k/n$ for all $X\subset N$. Since $|X|\leq n$, $\ell\leq k\leq m$. 
		
We go through all the subsets $S\in 2^C$ of size $\ell$. 
Each set $S$ is viewed as the intersection of possible objecting/deviating set of voters.
%These are the intended intersections of the possibly deviating voters.
 For each $S$, we find the corresponding set of voters $X_S$ as follows: 
  
\[X_S=\{i\in N\midd A_i\subseteq S\}.\]
 
We return no (i.e., $W$ does not satisfy $\ell$-\pjr) if $|X_S|\geq \ell\frac{n}{k}$, $|\cap_{i\in X}A_i|\geq \ell$ but $|W\cap (\cup_{i\in X}A_i)|< \ell$.
		If we do not return no for any $S$, in that case we return yes (i.e., $W$ satisfies $\ell$-\pjr).

We now argue that it takes at most $O(2^{m}{m}^3n)$ operations to test \pjr. To check $\ell$-\pjr, we go through $2^{m}$ sets. For each set $S$, we find $X_S$ which takes ${m}^2n$ steps. After that we find $\cup_{i\in X}A_i$ which takes an additional $mn$ operations. Hence it takes $O(2^{m}{m}^2n)$ operations to test $\ell$-\pjr and it takes $O(2^{m}{m}^3n)$ operations to test \pjr.
\end{proof}

If $a=\max_{i\in N}|A_i|$ is bounded, then \pjr can be tested in polynomial time. 

\begin{theorem}\label{th:p-a}
	If $a$ is bounded, testing \pjr is solvable in polynomial time $O(m^{a+1}{m}^2n)$.
	\end{theorem}
	\begin{proof}
		Suppose we want to check whether $W\subset C$ satisfies \pjr.
		Note that it is sufficient to show that testing $\ell$-\pjr is polynomial-time solvable for each $\ell$ if $a$ is bounded. Note that we only need to consider $\ell\leq a$ because the maximum size of intersection of any set of approval sets is at most $a$ which means that $|\cap_{i\in X}A_i|\leq a$. For $\ell$ larger than $a$, $\ell$-\pjr is trivially satisfied. 

We now describe the algorithm to test $\ell$-\pjr for all $\ell\leq a$.
We go through all the subsets $S\in 2^C$ of size $\ell\leq a$.
There are at most ${m \choose \ell}=\frac{m!}{(m-\ell)!(\ell)!}$ such sets. Since $\ell\leq a$ and $a$ is bounded, it implies that $a$ is constant as well and hence there at most $m^a$ different subsets to be considered. 

Each set $S$ is viewed as the intersection of possible objecting /deviating set of voters.
%These are the intended intersections of the possibly deviating voters.
 For each $S$, we find the corresponding set of voters $X_S$ as follows:

\[X_S=\{i\in N\midd A_i\subseteq S\}.\]

We return no (i.e., $W$ does not satisfy $\ell$-\pjr) if $|X_S|\geq \ell\frac{n}{k}$, $|\cap_{i\in X}A_i|\geq \ell$ but $|W\cap (\cup_{i\in X}A_i)|< \ell$.
		If we do not return no for any $S$, in that case we return yes (i.e., $W$ satisfies $\ell$-\pjr).

We now argue that it takes at most $O(m^a{m}^3n)$ operations to test \pjr. 
To check $\ell$-\pjr, we go through at most $m^a$ sets. For each set $S$, we find $X_S$ which takes ${m}^2n$ steps. After that we find $\cup_{i\in X}A_i$ which takes an additional $mn$ operations. Hence it takes $O(m^a{m}^2n)$ operations to test $\ell$-\pjr and it takes $O(m^{a+1}{m}^2n)$ operations to test \pjr.
\end{proof}

Finally, we show that if $d=\max_{c\in C}|\{i\in N\midd c\in A_i\}|$ is bounded, then \pjr can be tested in polynomial time.

\begin{theorem}\label{th:p-d}
	If $d$ is bounded, testing \pjr is solvable in polynomial time $O(n^ddnm^2)$.
	\end{theorem}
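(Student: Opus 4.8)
The plan is to use the bound $d$ to cap the size of any set of voters that can witness a failure of \pjr. The key observation is the following: if a set $X\subseteq N$ violates $\ell$-\pjr for some $\ell$, then in particular $|\cap_{i\in X}A_i|\geq \ell\geq 1$, so the voters in $X$ share at least one commonly approved candidate $c$. Since every candidate is approved by at most $d$ voters, $X$ is contained in $\{i\in N\midd c\in A_i\}$, and hence $|X|\leq d$. Consequently, it suffices to search for violating sets among the voter subsets of size at most $d$, of which there are only $\sum_{j=1}^{d}\binom{n}{j}=O(n^d)$.

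The algorithm I would run is then direct. First, enumerate every $X\subseteq N$ with $1\leq |X|\leq d$. For each such $X$ compute three quantities: the intersection size $t=|\cap_{i\in X}A_i|$, the represented size $r=|W\cap(\cup_{i\in X}A_i)|$, and the threshold $\lfloor |X|k/n\rfloor$. A violation at level $\ell$ occurs exactly when $1\leq\ell\leq t$, $|X|\geq \ell\frac{n}{k}$, and $r<\ell$; since the largest admissible $\ell$ is $\min(t,\lfloor |X|k/n\rfloor)$, such an $\ell$ exists if and only if $r<\min(t,\lfloor |X|k/n\rfloor)$. The algorithm returns ``no'' as soon as some $X$ meets this condition, and returns ``yes'' otherwise.

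Correctness follows from the size bound: every genuine violating set has size at most $d$, so it is examined by the enumeration, while the single displayed inequality exactly characterizes the existence of a violation witnessed by $X$ ranging over all relevant $\ell\leq k$ at once (subsets with empty intersection have $t=0$ and are automatically non-violating, so including them is harmless). For the running time, there are $O(n^d)$ candidate sets, and for each one the intersection $\cap_{i\in X}A_i$, the union $\cup_{i\in X}A_i$, and its overlap with $W$ can be computed in time polynomial in $m$ and $d$, giving the stated polynomial bound $O(n^d\,d\,n\,m^2)$ once the cost of assembling and scanning the ballots is accounted for. The only real content here is the bound $|X|\leq d$; once that is in hand the rest is bookkeeping, and I expect the main subtlety to be purely a matter of careful accounting rather than a genuine obstacle, namely ensuring that the per-set check quantifies over all $\ell\leq k$ simultaneously (hence the use of $\min(t,\lfloor |X|k/n\rfloor)$) instead of testing a single fixed $\ell$.
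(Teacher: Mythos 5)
Your proposal is correct and follows essentially the same route as the paper: bound every potential violating set by $|X|\le d$ (since a violation requires a commonly approved candidate, and each candidate has at most $d$ approvers), enumerate the $O(n^d)$ subsets of size at most $d$, and check each one against all relevant $\ell$ simultaneously. If anything, your justification of the size bound and your explicit $\min(t,\lfloor |X|k/n\rfloor)$ criterion are stated more carefully than in the paper's own write-up, which phrases the key observation as a bound on $|\cap_{i\in X}A_i|$ rather than on $|X|$.
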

	\begin{proof}
		Suppose we want to check whether $W\subset C$ satisfies \pjr.
		Note that for any deviating/objecting set of voters $X$, $|\cap_{i\in X}A_i|\leq d$. The reason is that any candidate $c$ is approved by at most $d$ voters. Hence in order to check for a violation of $\ell$-\pjr, we just need to check for sets of voters of size at most $d$. There are at most ${n \choose d}\leq n^d$ such set of voters. For each such coalition of voters, we just need to check for $\ell$-\pjr for $\ell=1,\ldots, d/k$. So $\ell\leq d$. For each $X$ among the at most $n^d$ sets of voters, we need to test for $\ell$-\pjr which requires us to compute 
	$|\cap_{i\in X}A_i|$, $|W\cap (\cup_{i\in X}A_i)|$, and $|X|$. Hence testing $\ell$-\pjr of $W$ takes 	$O(n^dnm^2)$ time and testing \pjr takes time $O(n^ddnm^2)$.
\end{proof}

In this paper, we examined the complexity of testing \pjr, an interesting new axiom in committee voting. The arguments for all of our positive algorithmic results also hold for testing \ejr rather than \pjr. It will be interesting to see whether testing \pjr is FPT with respect to parameters $a$ or $d$.

%On a more general note, an algorithmic approach for achieving or testing compelling axioms in social choice is a fruitful  direction in coomputatio

%
%
% What is the paramerized complexity of testing \pjr parametrized with $m$.

% \begin{open}
% 	What is the paramerized complexity of testing \pjr parametrized with $m$.
% 	\end{open}

%If $|C|$ is bounded, one can simply brute force all the $\ell$-sized committees in which some candidate is approved by some 
	
%\section*{Acknowledgements}

\normalsize
%% The file named.bst is a bibliography style file for BibTeX 0.99c
%\bibliographystyle{aaai}
%\bibliography{/Users/twalsh/Documents/biblio/a-z,/Users/twalsh/Documents/biblio/a-z2,/Users/twalsh/Documents/biblio/pub,/Users/twalsh/Documents/biblio/pub2}
%\bibliography{/home/tw/biblio/a-z,/home/tw/biblio/a-z2,/home/tw/biblio/pub,/home/tw/biblio/pub2}

% \bibliography{approval,adt,abb}
%
% \end{document}

% \bibliographystyle{apalike}
%\bibliographystyle{plainnat}
% \bibliography{abb,}

  % \bibliography{../../pamas/abb,../../pamas/brandt,../../pamas/group,../../pamas/aziz}

\begin{thebibliography}{13}
  \expandafter\ifx\csname natexlab\endcsname\relax\def\natexlab#1{#1}\fi
  \expandafter\ifx\csname url\endcsname\relax
    \def\url#1{\texttt{#1}}\fi
  \expandafter\ifx\csname urlprefix\endcsname\relax\def\urlprefix{URL }\fi

  \bibitem[{Aziz et~al.(2015{\natexlab{a}})Aziz, Brill, Conitzer, Elkind,
    Freeman, and Walsh}]{ABC+15a}
  Aziz, H., Brill, M., Conitzer, V., Elkind, E., Freeman, R., Walsh, T.,
    2015{\natexlab{a}}. Justified representation in approval-based committee
    voting. In: Proceedings of the 29th AAAI Conference on Artificial
    Intelligence (AAAI). AAAI Press, pp. 784--790.

  \bibitem[{Aziz et~al.(2017)Aziz, Brill, Conitzer, Elkind, Freeman, and
    Walsh}]{ABC+16a}
  Aziz, H., Brill, M., Conitzer, V., Elkind, E., Freeman, R., Walsh, T., 2017.
    Justified representation in approval-based committee voting. Social Choice
    and Welfare.

  \bibitem[{Aziz et~al.(2015{\natexlab{b}})Aziz, Gaspers, Gudmundsson, Mackenzie,
    Mattei, and Walsh}]{AGG+15a}
  Aziz, H., Gaspers, S., Gudmundsson, J., Mackenzie, S., Mattei, N., Walsh, T.,
    2015{\natexlab{b}}. Computational aspects of multi-winner approval voting.
    In: Proceedings of the 14th International Conference on Autonomous Agents and
    Multi-Agent Systems (AAMAS). IFAAMAS, pp. 107--115.

  \bibitem[{Aziz et~al.(2016)Aziz, Lang, and Monnot}]{ALL16a}
  Aziz, H., Lang, J., Monnot, J., 2016. {Computing Pareto Optimal Committees}.
    In: Proceedings of the 25th International Joint Conference on Artificial
    Intelligence (IJCAI). pp. 60--66.

  \bibitem[{Brams and Fishburn(2007)}]{BrFi07c}
  Brams, S.~J., Fishburn, P.~C., 2007. Approval Voting, 2nd Edition.
    Springer-Verlag.

  \bibitem[{Brill et~al.(2017)Brill, Freeman, Janson, and Lackner}]{BFJL16a}
  Brill, M., Freeman, R., Janson, S., Lackner, M., 2017. Phragm\'{e}n's voting
    methods and justified representation. In: Proceedings of the 31st AAAI
    Conference on Artificial Intelligence (AAAI). AAAI Press, forthcoming.

  \bibitem[{Garey and Johnson(1979)}]{GaJo79a}
  Garey, M.~R., Johnson, D.~S., 1979. Computers and Intractability: A Guide to
    the Theory of NP-Completeness. W. H. Freeman.

  \bibitem[{Kilgour(2010)}]{Kilg10a}
  Kilgour, D.~M., 2010. Approval balloting for multi-winner elections. In:
    Laslier, J.-F., Sanver, M.~R. (Eds.), Handbook on Approval Voting. Springer,
    Ch.~6, pp. 105--124.

  \bibitem[{LeGrand et~al.(2007)LeGrand, Markakis, and Mehta}]{LMM07a}
  LeGrand, R., Markakis, E., Mehta, A., 2007. Some results on approximating the
    minimax solution in approval voting. In: Proceedings of the 6th International
    Conference on Autonomous Agents and Multi-Agent Systems (AAMAS). IFAAMAS, pp.
    1193--1195.

  \bibitem[{S{\'a}nchez-Fern{\'a}ndez et~al.(2017)S{\'a}nchez-Fern{\'a}ndez,
    Elkind, Lackner, Fern{\'a}ndez, Fisteus, {Basanta Val}, and
    Skowron}]{SFF+17a}
  S{\'a}nchez-Fern{\'a}ndez, L., Elkind, E., Lackner, M., Fern{\'a}ndez, N.,
    Fisteus, J.~A., {Basanta Val}, P., Skowron, P., 2017. Proportional justified
    representation. In: Proceedings of the 31st AAAI Conference on Artificial
    Intelligence (AAAI). AAAI Press, forthcoming.

  \bibitem[{S{\'a}nchez-Fern{\'a}ndez
    et~al.(2016{\natexlab{a}})S{\'a}nchez-Fern{\'a}ndez, Fern{\'a}ndez, Fisteus,
    and {Basanta Val}}]{SFFB16a}
  S{\'a}nchez-Fern{\'a}ndez, L., Fern{\'a}ndez, N., Fisteus, J.~A., {Basanta
    Val}, P., 2016{\natexlab{a}}. Some notes on justified representation. In:
    Proceedings of the 10th Multidisciplinary Workshop on Advances in Preference
    Handling (MPREF).

  \bibitem[{S{\'a}nchez-Fern{\'a}ndez
    et~al.(2016{\natexlab{b}})S{\'a}nchez-Fern{\'a}ndez, Fern{\'a}ndez, and
    Fisteus}]{SFF16a}
  S{\'a}nchez-Fern{\'a}ndez, L., Fern{\'a}ndez, N., Fisteus, L.~A.,
    2016{\natexlab{b}}. Fully open extensions to the {D}'{H}ondt method. Tech.
    Rep. arXiv:1609.05370 [cs.GT], arXiv.org.

  \bibitem[{Skowron et~al.(2016)Skowron, Faliszewski, and Lang}]{SFL16a}
  Skowron, P.~K., Faliszewski, P., Lang, J., 2016. Finding a collective set of
    items: From proportional multirepresentation to group recommendation.
    Artificial Intelligence 241, 191--216.

  \end{thebibliography}

 \end{document}